\newacronym{CACC}{CACC}{cooperative adaptive cruise control}
\newacronym{dCACC}{dCACC}{degraded cooperative adaptive cruise control}
\newacronym{ACC}{ACC}{adaptive cruise control}
\newacronym{CGKF}{CGKF}{constant gain Kalman filter}
\newacronym{ZOH}{ZOH}{zero-order-hold}
\newacronym{MSE}{MSE}{mean square error}
\newacronym{MMSE}{MMSE}{minimum mean square error}
\newacronym{MIMO}{MIMO}{multiple-input-multiple-output}
\newacronym{ISAC}{ISAC}{integrated sensing and communication}
\newacronym{BS}{BS}{base station}
\newacronym{iid}{i.i.d}{independent and identically distributed}
\newacronym{DD}{DD}{data-dependent}
\newacronym{DI}{DI}{data-independent}
\newacronym{wrt}{w.r.t.}{with respect to}
\newacronym{SVD}{SVD}{singular value decomposition}
\newacronym{DFT}{DFT}{discrete Fourier transform}
\newacronym{SAA}{SAA}{sample average approximation}
\newacronym{UE}{UE}{user equipment}
\newacronym{ADC}{ADC}{anolog-to-digital converter}
\newacronym{TIR}{TIR}{target impulse response}
\DeclarePairedDelimiter\abs{\lvert}{\rvert}%
\DeclarePairedDelimiter\norm{\lVert}{\rVert}%
\newcommand\myvec[1]{\bm{#1}}
\newcommand\mymat[1]{\mathbf{#1}}
\newcommand\mymatb[1]{\bar{\mathbf{#1}}}
\newcommand\myvech[1]{\hat{\bm{#1}}}
\newcommand\myexpect[1]{\mathbb{E}\left\{#1\right\}}
\newcommand{\myvtr}{\mbox{vec}}
\newcommand{\mytr}{\mbox{Tr}}
\newcommand\myexpecttheta[1]{\mathbb{E}_{\mymat{\Theta}}\left\{#1\right\}}
\newcommand\myexpectd[1]{{E}_{D}\left(#1\right)}
\newcommand\mytrop[1]{\operatorname{Tr}\left(#1\right)}
\let\oldabs\abs
\def\abs{\@ifstar{\oldabs}{\oldabs*}}
\let\oldnorm\norm
\def\norm{\@ifstar{\oldnorm}{\oldnorm*}}
\renewcommand{\fnum@figure}{Fig. \thefigure}
\newtheorem{theorem}{Theorem}
\newtheorem{remark}{\it{Remark}}
\let\oldtheorem\theorem
\let\endoldtheorem\endtheorem
\let\oldremark\remark
\let\endoldremark\endremark
\renewenvironment{theorem}
{\oldtheorem\normalfont}
{\endoldtheorem}
\renewenvironment{remark}
{\oldremark\normalfont}
{\endoldremark}
\begin{document}
\title{Task-Based Quantizer Design for Sensing With Random Signals}

\author{%
	\IEEEauthorblockN{Hang Ruan, Fan Liu}
	\IEEEauthorblockA{School of System Design and Intelligent Manufacturing\\ Southern University of Science and Technology\\ Shenzhen 518055, China\\
		Email: ruanhang\_ee@outlook.com, liuf6@sustech.edu.cn}
}


\maketitle

\begin{abstract}
	In integrated sensing and communication (ISAC) systems, random signaling is used to convey useful information as well as sense the environment. Such randomness poses challenges in various components in sensing signal processing. In this paper, we investigate quantizer design for sensing in ISAC systems. Unlike quantizers for channel estimation in massive multiple-in-multiple-out (MIMO) communication systems, sensing in ISAC systems need to deal with random nonorthogonal transmitted signals rather than a fixed orthogonal pilot. Considering sensing performance and hardware implementation, we focus on task-based hardware-limited quantization with spatial analog combining. We propose two strategies of quantizer optimization, i.e., data-dependent (DD) and data-independent (DI). The former achieves optimized sensing performance with high implementation overhead. To reduce hardware complexity, the latter optimizes the quantizer with respect to the random signal from a stochastic perspective. We derive the optimal quantizers for both strategies and formulate an algorithm based on sample average approximation (SAA)  to solve the optimization in the DI strategy. Numerical results show that the optimized quantizers outperform digital-only quantizers in terms of sensing performance. Additionally, the DI strategy, despite its lower computational complexity compared to the DD strategy, achieves near-optimal sensing performance.
\end{abstract}

\section{Introduction}
\label{sec:intro}
In the next-generation wireless networks, sensing assumes a vital role for applications including intelligent transportation, smart manufacturing, smart cities and public safety. \Gls{ISAC} is identified as a key technology to achieve ubiquitous sensing with communication systems \cite{liu2022integrated}. \Gls{ISAC} aims at implementing the functionalities of sensing and communication on the shared wireless resources and hardware, which is enabled by their similarities in hardware architectures, channel characteristics and signal processing pipelines, thereby offering benefits including reduced hardware cost, improved spectral and energy efficiency, and reduced latency \cite{liu2022integrated}. 
More recently, \gls{MIMO} \gls{ISAC} has gained growing research
interests, driven by the benefits including the spatial multiplexing and diversity gains in communication, and the waveform and spatial diversity gains in sensing \cite{gao2022integrated}.
In a typical \gls{ISAC} paradigm \cite{liu2022integrated}, an \gls{ISAC} transmitter, e.g., a \gls{BS},  emits a waveform consisting of a sequence of snapshots to communicate with other communication devices. 
This waveform is also received by a sensing receiver, e.g., the same or another \gls{BS}, after propagating in a sensing channel, which is exploited to extract information of the environment.

A unique characteristic of \gls{ISAC} is that the transmitted signals must be random to convey useful information \cite{lu2023random, zhang2021overview}. Such randomness can be realized by random selection from certain codebooks \cite{zhang2021overview}. In contrast, conventional sensing systems, e.g., radars, usually use deterministic signals with specific properties, e.g., narrow mainlobe, high peak-to-sidelobe level ratio (PSLR) and orthogonality \cite{xie2023novel}. These properties may be compromised due to random signaling in \gls{ISAC}. Thus, the sensing performance may be degraded when directly applying conventional sensing techniques based on deterministic signals to \gls{ISAC} sensing. For example, \cite{lu2023random} investigates the precoding design in \gls{ISAC} systems, revealing that conventional precoding scheme for deterministic signals leads to higher \gls{MSE} for sensing with random \gls{ISAC} signals. 
In the same spirit, we identify that the quantizer in \gls{ISAC} systems may also need dedicated design due to random signaling, 
which motivates this work. 

In \gls{ISAC}, the sensing task is mainly accomplished in the digital domain, which requires the received analog signal to be quantized into digital presentations before further signal processing \cite{gray1998quantization}. More relevant to this work is the quantizer design for channel estimation in massive \gls{MIMO} networks \cite{shlezinger2019asymptotic, ma2021bit}, where 
the following three aspects are elaborated: \textit{1) Task-ignorant quantization and task-based quantization}: Typically, the signal is quantized following the criterion that some general distortion measure, e.g., \gls{MSE}, between the analog and digital signal is minimized, whereas ignoring the specific aim of the system \cite{gray1998quantization}, referred to as task-ignorant quantization; In massive \gls{MIMO} networks, however, the system's task is to estimate the channel rather than to recover the analog signal itself \cite{shlezinger2019asymptotic, ma2021bit}. By designing the quantizers oriented to the specific task, referred to as task-based quantization, better sensing performance may be achieved \cite{shlezinger2019asymptotic}. \textit{2) Vector quantization and scalar quantization}: Vector quantization is proven to achieve better performance of the task \cite{shlezinger2019hardware}, but it is infeasible for real-time applications with high-dimensional inputs due to its computational and hardware complexity. In this case, it is more practical to implement scalar quantization. Specifically, \cite{shlezinger2019asymptotic} and \cite{ma2021bit} exploit the scalar quantization with uniform \glspl{ADC}, also referred to as hardware-limited task-based quantization. \textit{3) Temporal analog combining and spatial analog combining}: Due to the storage limitation of the sensing receiver, it is usually impractical to collect all the raw data and then quantize the whole signal, referred to as temporal analog combining; Instead, it is more desirable to quantize the signal from all antennas snapshot by snapshot, also referred to as spatial analog combining, which has been investigated in \cite{shlezinger2019asymptotic}. These three aspects are also applicable to quantization in \gls{ISAC} systems, motivating us to consider hardware-limited task-based quantization with spatial analog combining only.



Nevertheless, the quantization methods in \cite{shlezinger2019asymptotic, ma2021bit} cannot be directly applied to \gls{ISAC} sensing since they generally assume fixed pilot signals with orthogonality. 
In the case of \gls{ISAC} systems, however, the transmitted signals are random 
and the orthogonality cannot be guaranteed, as aforementioned.

To address the challenge in quantization of \gls{ISAC} systems, this paper investigates the task-based hardware-limited quantization in the receiver with random transmitted signals. 
The task of sensing receiver is to estimate the \gls{TIR}, whose quantizer is designed to minimize the \gls{MSE} of \gls{TIR} estimation. Our main contributions  are as follows:

1) We formulate the system and signal model of sensing receiver in \gls{MIMO} \gls{ISAC} systems. Compared to \cite{shlezinger2019asymptotic, ma2021bit}, the signal randomness and the correlation at the transmitting antennas are considered in the modeling.
	
 2) We propose two strategies for quantizer design. The quantizer consists of a pre-processing matrix, a sequence of scalar \glspl{ADC} and a post-processing matrix. The two matrices are optimized with the criterion of minimizing the \gls{MSE} of \gls{TIR} estimation. Facing the randomness of \gls{ISAC} signals, we propose both \gls{DD} and \gls{DI} strategies for quantizer optimization. In the \gls{DD} strategy, the pre-processing matrix is designed differently for each realization of transmitted signal, where the \gls{MSE} is averaged over the  \gls{TIR} and receiver noise, resulting in high implementation overhead. In the \gls{DI} strategy, a fixed pre-processing matrix is optimized by minimizing the \gls{MSE} \gls{wrt} \gls{TIR}, receiver noise and the transmitted signal, which may be readily implemented in practice.
 
3) We derive the solution of quantizer optimization. We prove that the optimal quantizer can be obtained by a combination of \gls{SVD}, eigen-decomposition and convex optimization. Then, we propose an algorithm based on \gls{SAA} to tackle the stochastic optimization problem in the \gls{DI} strategy.

4) We evaluate the proposed quantizer design strategies with numerical results. Our results demonstrate that the DI strategy can achieve sensing performance close to DD with lower implementation overhead. It is also demonstrated that the proposed quantizers outperform digital-only quantization in terms of sensing performance. 

The remainder of this paper is arranged as follows: In Section \ref{sec:model}, we formulate the system and signal model of the \gls{ISAC} system. In Section \ref{sec:quantizer_structure}, we construct the structure of quantizer in the sensing receiver and optimize the quantizer. In Section \ref{sec:result}, we present the numerical results. In Section \ref{sec:discussion}, we present the conclusion and discussion.

\textbf{Notations}: Boldface lower-case letters, e.g., $\bm{x}$, denote vectors; The $i$th element of $\bm{x}$ is written as $(\bm{x})_i$. Boldface upper-case letters, e.g., $\mathbf{M}$, denote matrices and $(\mathbf{M})_{i,j}$ is its $(i, j)$th element. 
Transpose, Hermitian transpose, complex conjugate, vectorization, Euclidean norm, trace, stochastic expectation, real part, imaginary part, sign and rounding toward negative infinity 
are written as $(\cdot)^T$, $(\cdot)^H$, $(\cdot)^*$, $\myvtr(\cdot)$, $\|\cdot\|$, $\text{Tr}(\cdot)$, $\mathbb{E}\{\cdot\}$, $\operatorname{Re}(\cdot)$, $\operatorname{Im}(\cdot)$, $\text{sign}(\cdot)$ and $\lfloor\cdot\rfloor$, 
 respectively, and $\mathbb{R}$ and $\mathbb{C}$ are the domains of real and complex numbers, respectively. The Kronecker product operator between two matrices is denoted by $\otimes$. For a semi-positive definite matrix $\mymat{A} \in \mathbb{C}^{n\times n}$, $\mymat{A}^{\frac{1}{2}} \in \mathbb{C}^{n\times n}$ denotes the matrix such that $\mymat{A}^{\frac{1}{2}} (\mymat A^{\frac{1}{2}})^H = (\mymat{A}^{\frac{1}{2}})^H \mymat{A}^{\frac{1}{2}} = \mymat{A}$. We use 
 $\mathbf{I}_n$ to denote the $n \times n$ identity matrix. 

\section{System and signal model}
\label{sec:model}
We consider a \gls{MIMO} \gls{ISAC} system. The \gls{BS} is equipped with a pair of transmitting and receiving antenna arrays. The number of transmitting and receiving antennas are $N_t$ and $N_r$, respectively. In each frame, the \gls{BS} transmits a signal consisting of $L$ snapshots to communicate with other devices (We assume $L\geq N_t$). The transmitted signal is denoted by $\mymat{\Theta} = [\myvec{\theta}_1, \dots, \myvec{\theta}_L]\in\mathbb{C}^{N_t\times L}$. The signal of each snapshot, $\myvec{\theta}_l \in \mathbb{C}^{N_t}$, are \gls{iid} variables following $\mathcal{CN}(\myvec{0}, \mymat{R}_{\theta})$. 
Additionally, $\mymat{\Theta}$ is reflected by the environment and the echo $\mymat{Y} = [\myvec{y}_1, \dots, \myvec{y}_L] \in\mathbb{C}^{N_t\times L}$ is received by the receiving array, which is used to sense the environment. The received signal $\mymat{Y}$ is given by
\vspace{-0.2cm}
\begin{equation}
\label{eqn:system_model}
\mymat{Y} = \mymat{G}\mymat{\Theta}+\mymat{W},
\vspace{-0.2cm}
\end{equation}
where 
$\mymat{G}\in\mathbb{C}^{N_r\times N_t}$ is the \gls{TIR} to be estimated, which is supposed to follow the Kronecker model, written as \cite{weichselberger2006stochastic}
\vspace{-0.2cm}
\begin{equation}
	\mymat{G} = \mymat{R}_A^{\frac{1}{2}} \mymat{G}_0 \left(\mymat{R}_B^{\frac{1}{2}}\right)^T,
\vspace{-0.2cm}
\end{equation}
where $\mymat{R}_A \in \mathbb{C}^{N_r \times N_r}$ and $\mymat{R}_B \in \mathbb{C}^{N_t \times N_t}$ are the spatial correlation matrices at the receiving and transmitting array, respectively. The entries of matrix $\mymat{G}_0 \in \mathbb{C}^{N_r \times N_t}$ are \gls{iid} variables following $\mathcal{CN}(0,1)$. By letting $\myvec{g} = \myvtr (\mymat{G}) \in \mathbb{C}^{N_t N_r}$, the correlation of channel may be presented as follows \cite{kermoal2002stochastic}:
\vspace{-0.2cm}
\begin{equation}
\mymat{R}_g = \mathbb{E}\{\myvec{g}\myvec{g}^H\} = \mymat{R}_B \otimes \mymat{R}_A.
\vspace{-0.2cm}
\end{equation}
The matrix $\mymat{W} \in \mathbb{C}^{N_t \times L}$ denotes the receiver noise. Considering the spatial correlation of the receiving array, its $l-$th column $\myvec{w}_l \in \mathbb{C}^{N_t}$ follows $\mathcal{CN}(\myvec{0}, \sigma_w^2 \mymat{R}_A)$ in an \gls{iid} manner. 

Letting $\myvec{y} = \myvtr (\mymat{Y})$ and $\myvec{w} = \myvtr (\mymat{W})$ transforms \eqref{eqn:system_model} as
\vspace{-0.2cm}
\begin{equation}
	\myvec{y} = \left(\mymat{\Theta}^T\otimes \mymat{I}_{N_r}\right)\myvec{g}+\myvec{w}.
\vspace{-0.2cm}
\end{equation}

\begin{remark}
	Our formulation of signal model is more general than that in \cite{shlezinger2019asymptotic} from the following three aspects: First, the transmitted signals are random \gls{ISAC} signals rather than fixed pilots in \cite{shlezinger2019asymptotic}; Second, the transmitted signal is not necessarily orthogonal, i.e., $\mymat{\Theta \Theta}^H \neq L\mymat{I}_{N_t}$; Third, the spatial correlation of the transmission is also involved, i.e., we do not require $\mymat{R_B}$ to be diagonal. 
	\vspace{-0.2cm}
\end{remark}

The received signal $\myvec{y}$ is quantized into a vector $\myvec{z}$ of length $P$, which is encoded with a quantization rate $R$ \cite{shlezinger2019hardware}, corresponding to  $M_{\text{bit}} = R N_r L$ bits at maximal, which depends on the available memory. Then, the \gls{BS} attempts to attain an estimate of \gls{TIR} $\myvec{g}$ by leveraging $\myvec{z}$, denoted by $\myvech{g}$.

\section{Quantizer structure and optimization}
\label{sec:quantizer_structure}
In this section, we construct the structure of quantizer in the sensing receiver and derive its optimum. To this end, we first formulate the quantizer structure based on some hardware considerations; Second, we formulate the \gls{MSE} of \gls{TIR} estimation as the performance metric of quantizer design; Third, we formulate the quantizer optimization in the strategies of \gls{DD} and \gls{DI}; Finally, we provide a theorem on the optimal quantizer and an \gls{SAA}-based algorithm to solve the stochastic optimization involved in the \gls{DI} strategy.

As discussed in Section \ref{sec:intro}, when designing the quantizer, several aspects of hardware limitation \gls{BS} need to be considered: First, vector quantization may not be feasible in practice, especially for massive \gls{MIMO}; Instead, scalar quantizers are easier to implement in applications. Second, 	the \gls{BS} may not be able to storage all the snapshots in the analog domain due to the memory limitation; Thus, it is desirable to quantize the signal with spatial analog combining snapshot by snapshot.
Considering these limitations, we consider the receiver with similar structure to \cite{shlezinger2019asymptotic}, whose three steps are as follows:

	\textbf{1) Spatial analog combining}: For the received signal in each snapshot, i.e., $\myvec{y}_l$, spatial analog combining is performed with a pre-processing matrix $\mymat{A} \in \mathbb{C}^{\tilde{P}\times N_r}$, yielding $\myvec{u}_l = \mymat{A}\myvec{y}_l$, where $\tilde{P} = P/L$ is the dimension of combining output. According to \cite[Corollary 1]{shlezinger2019hardware}, $\tilde{P}\leq N_r$ is assumed in order to minimize the \gls{MSE} of \gls{TIR} estimation. After the stacking 
	 $\myvec{u} = \myvtr([\myvec{u}_1,\dots, \myvec{u}_L])\in \mathbb{C}^{L\tilde{P}}$, this step can be rewritten as 
	 \vspace{-0.2cm}
	\begin{equation}
	\myvec{u} = (\mymat{I}_L \otimes \mymat{A}) \myvec{y}.
	\vspace{-0.2cm}
	\end{equation}
	
	\textbf{2) Scalar quantization}: The $L\tilde{P}$ entries of $\myvec{u}$ are fed into $L\tilde{P}$ identical scalar dithered quantizers \cite{gray1993dithered} with resolution $\tilde{M}$, yielding the quantized vector $\myvec{z} \in \mathbb{C}^{L\tilde{P}}$, i.e.,
	\vspace{-0.2cm}
	\begin{equation}
	(\myvec{z})_i = Q_{\tilde{M}, K_d} \left((\myvec{u}\right)_i),
	\vspace{-0.2cm}
	\end{equation}
	where $Q_{\tilde{M}, K_d} (\cdot)$ denotes the scalar quantizer with resolution $\tilde{M}$ and $K_d$ dither signals. Given $M_{\text{bit}}$ bits  and $\tilde{P}$ scalar \glspl{ADC}, $\tilde{M}$ is given by $\tilde{M} = \left\lfloor 2^{\frac{M_{\text{bit}}}{\tilde{P}L}} \right\rfloor$. We consider the dithered quantizer since when $K_d \geq 2$ and the input plus dither signals is within the quantizer's support $\gamma$, such choice enables the quantizer's output to be written as the sum of the input and an uncorrelated white quantization noise \cite{gray1993dithered}, thereby greatly facilitating the analysis on system performance. Additionally, this property of dithered quantizers is also approximately satisfied in uniform quantizers without dithering with a wide range of input distributions including Gaussian \cite{widrow1996statistical}. 
	The quantization spacing is $\Delta = 2\gamma/\tilde{M}$. When the dithered quantizer's input is $x\in \mathbb{C}$, the output is  
	\vspace{-0.2cm}
	\begin{equation} \label{eqn:dither_scalar}
	\begin{aligned}
	Q_{\tilde{M}} (x) \! = \! q\!\left(\operatorname{Re}\!\left\{x\!+\!\sum_{k=1}^{K_d}\!\xi_k\right\}\!\right)
	\!+\!j  q\!\left(\operatorname{Im}\!\left\{x\!+\!\sum_{k=1}^{K_d}\!\xi_k\!\right\}\!\right), 
	\end{aligned}
	\vspace{-0.2cm}
	\end{equation}
	where $\{\xi_k\}_{k=1}^{\tilde{K_d}}$ are complex random variables independent of input $x$, with \gls{iid} real and imaginary parts uniformly distributed over $\left[-{\Delta}/{2}, {\Delta}/{2}\right]$, and $q(\cdot)$ is a uniform quantizer given by
	\begin{equation}
	q(\alpha)= \begin{cases}-\gamma+\Delta\left(l+\frac 12\right), & \alpha-l \cdot \Delta+\gamma \in[0, \Delta], \\ \operatorname{sign}(\alpha)\left(\gamma-\frac{\Delta}{2}\right), & |\alpha|>\gamma. \end{cases}
	\end{equation}
	The setting of support $\gamma$ should guarantee that the dithered input is within the operating range $[-\gamma, \gamma]$ with sufficiently high probability. To this end, $\gamma$ shall be defined as some multiple $\eta$ of the maximal standard deviation of the input $\myvec{u}$ plus dither signals, denoted by $\myvec{\xi}_1,\dots, \myvec{\xi}_{K_d}$, i.e., 
	\begin{equation}
	\label{eqn:support_multiple}
	\gamma = \eta \sqrt{\max_{i = 1,\dots, L\tilde{P}}\myexpect{\left(\myvec{u}+\sum_{k=1}^{K_d} \myvec{\xi}_k\right)_{i.i}^2}}.
	\end{equation}
	In the following analysis, we assume that dithered input is within the operating range with probability $1$.
	
	\textbf{3) Digital processing}: The estimate $\myvech{g}$ is obtained by feeding $\myvec{z}$ into a post-processing matrix $\mymat{B}\in \mathbb{C}^{N_tN_r \times L\tilde{P}}$, i.e.
	\begin{equation}
	\myvech{g} = \mymat{B}\myvec{z}.
	\end{equation}

The purpose of quantizer design is to find the optimal $\mymat{A}$ and $\mymat{B}$ such that the distortion between $\myvec{g}$ and $\myvech{g}$, quantified with average \gls{MSE} in this paper, reaches its minimum, i.e.,
\begin{equation} 
\label{eqn:mse_g}
\min _{\mymat{A}, \mymat{B}} \sigma^2_g \triangleq \frac{1}{N_t N_r}\mathbb{E}\left\{\norm{\myvech{g}-\myvec{g}}^2\right\}.
\end{equation}

\begin{remark}
	\label{remark:tradeoff}
	When the quantization rate $R$ is fixed, there is a trade-off between the number of scalar \glspl{ADC} $\tilde{P}$ and quantization resolution $\tilde{M}$, given by $\tilde{M} = \left\lfloor 2^{\frac{M_{\text{bit}}}{\tilde{P}L}} \right\rfloor$. This trade-off can also be quantified with the analog combining ratio $r \triangleq \tilde{P}/N_r$ with a fixed $R$ \cite{shlezinger2019asymptotic}, which will be investigated with numerical results in Section \ref{sec:result}.
\end{remark}

In \gls{ISAC} systems, the transmitted signal $\mymat{\Theta}$ is a random but known signal rather than a fixed pilot, motivating us to consider two strategies of designing $\mymat{A}$ and $\mymat{B}$. The first is to design $\mymat{A}, \mymat{B}$ depending on each different $\mymat{\Theta}$ and the expectation in \eqref{eqn:mse_g} is \gls{wrt} the channel $\myvec{g}$ and receiver noise $\myvec{w}$. This strategy is similar to \cite{shlezinger2019asymptotic} and called \gls{DD} in this paper. As shown in the sequel, however, solving the optimal $\mymat{A}$ involves a convex optimization, which increase the hardware and computational complexity, making it more difficult for real-time implementation. Therefore, we consider the second strategy where $\mymat{A}$ is fixed and independent of $\mymat{\Theta}$ and the expectation in \eqref{eqn:mse_g} is \gls{wrt} to not only $\myvec{g}$ and $\myvec{w}$ but also $\mymat{\Theta}$. Such strategy is called \gls{DI}. To facilitate a uniform notation for \gls{DD} and \gls{DI} in the sequel, we introduce the following operator:
\vspace{-0.2cm}
\begin{equation} \label{eqn:operator}
E_{D,i}(x) =
\begin{cases}
x, &\quad i = 0,\\
\myexpecttheta{x}, &\quad i = 1, \\
\end{cases}
\vspace{-0.2cm}
\end{equation}
where $x$ is a scalar, vector or matrix function of $\mymat{\Theta}$. The case of $i=0$ corresponds to the \gls{DD} strategy since it treats $\mymat{\Theta}$ as a deterministic variable, while the case of $i=1$ corresponds to the \gls{DI} strategy since it treats $\mymat{\Theta}$ as a random variable. Moreover, for the conciseness of notation, we will drop $i$ in this operator and use $E_{D}$ instead when a uniform notation for both strategies is possible.

For both strategies, the matrices $\mymat{A}$ and $\mymat{B}$ that minimize the average \gls{MSE} are stated in the following theorem:

\begin{theorem} \label{theorem:opt}
	When $K_d\geq 2$ and $\eta < \sqrt{3/(2K_d)}\tilde{M}$, the optimal analog combining matrix $\mymat{A}$ for the \gls{DD} and \gls{DI} strategies can be given by $\mymat{A} = \mymat{U\Lambda V}^H \mymat{R}_A^{-\frac{1}{2}}$, where $\mymat{U}\in \mathbb{C}^{\tilde{P} \times \tilde{P}}$ is the unitary \gls{DFT} matrix, i.e.,
	\vspace{-0.2cm}
	\begin{equation} \label{eqn:dft}
	 (\mymat{U})_{p,q} = \frac{1}{\sqrt{\tilde{P}}} e^{-j2\pi(p-1)(q-1)/\tilde{P}}.
	 \vspace{-0.2cm}
	\end{equation}
The matrix $\mymat{V}^H \in \mathbb{C}^{N_r \times N_r}$ is the eigenmatrix of $\mymat{R}_A$, i.e., $\mymat{\Lambda}_A = \mymat{V}^H \mymat{R}_A \mymat{V} \in \mathbb{C}^{N_r \times N_r}$ is a diagonal matrix, whose diagonal entries are the eigenvalues of $\mymat{R}_A$, denoted by $\{\lambda_{A,i}\}_{i=1}^{N_r}$ and assumed to be sorted in the descending order. The matrix $\mymat{\Lambda} \in \mathbb{C}^{\tilde{P}\times N_r}$ is a diagonal matrix with non-negative diagonal entries $\{\bar\sigma_{i}\}_{i=1}^{\tilde{P}}$. In the \gls{DD} strategy, $\{\bar\sigma_{i}\}_{i=1}^{\tilde{P}}$ is the solution to the following convex optimization:
\vspace{-0.2cm}
	\begin{equation} \label{eqn:opt_eig_DD}
	\begin{aligned}
	\left\{\bar{\sigma}_i\right\}_{i=1}^{\tilde{P}}= & \underset{\left\{\sigma_i\right\}_{i=1}^{\tilde{P}}}{\arg \max } \sum_{i=1}^{\tilde{P}} \sum_{n_t=1}^{N_t} 
	\frac{d_{B, n_t} \lambda_{A,i} \lambda'_{n_t}\sigma^2_i}{(\lambda'_{n_t}+\sigma_w^2) \sigma_i^2 + \beta}, \\
	& \text { subject to }  \sum_{i=1}^{\tilde{P}} \sigma_i^2 = 1, 
	\end{aligned}
	\vspace{-0.2cm}
	\end{equation}
	while in the \gls{DI} strategy, $\{\bar\sigma_{i}\}_{i=1}^{\tilde{P}}$ is the solution to the following convex optimization:
	\begin{equation} \label{eqn:opt_eig_DI}
	\begin{aligned}
	\left\{\bar{\sigma}_i\right\}_{i=1}^{\tilde{P}}= & \underset{\left\{\sigma_i\right\}_{i=1}^{\tilde{P}}}{\arg \max } \mathbb{E}_{\mymat{\Theta}}\left\{\sum_{i=1}^{\tilde{P}} \sum_{n_t=1}^{N_t} \frac{\lambda_{A,i} \lambda'_{n_t}\sigma^2_i}{(\lambda'_{n_t}+\sigma_w^2) \sigma_i^2 + \beta}\right\}, \\
	& \text { subject to }  \sum_{i=1}^{\tilde{P}} \sigma_i^2 = 1, 
	\end{aligned}
	\end{equation}
	where $\beta = \frac{2(K_d+1)\kappa \sigma_{\max}^2}{3\tilde{M}^2 \tilde{P}}$, $\kappa = \eta^2 \left(1-\frac{2 K_d\eta^2}{3\tilde{M}^2}\right)^{-1}$ and
	\begin{equation} \label{eqn:sigma_max}
	\sigma_{\max}^2 = \mytr\left( \mymat{R}_\theta^* \mymat{R}_B \right) + \sigma_w^2.
	\end{equation}
	The eigenvalues of random matrix $\mymat{R}_B^{\frac{1}{2}} \mymat{\Theta}^* \mymat{\Theta}^T (\mymat{R}_B^{\frac{1}{2}})^H$ construct $\{\lambda'_{n_t}\}_{n_t = 1}^{N_t}$.  In \eqref{eqn:opt_eig_DD},  $\{d_{B,i}\}_{i=1}^{N_t}$ consists of the non-negative diagonal entries of $\mymat{U}'^H \mymat{R}_B \mymat{U}'$, where $\mymat{U}'$ is the eigenmatrix of $\mymat{R}_B^{\frac{1}{2}} \mymat{\Theta}^* \mymat{\Theta}^T (\mymat{R}_B^{\frac{1}{2}})^H$.

	With $\mymat{A}$ provided for both \gls{DD} and \gls{DI}, the corresponding optimal $\mymat{B}$ is given by
	\begin{equation} \label{eqn:opt_B}
	\begin{aligned}
		&\mymat{B} = \left(\mymat{R}_B \mymat{\Theta}^* \otimes \mymat{R}_A {\mymat{A}}^H\right) \\
		&\times \!\left(\!\left(\!\left(\!\mymat{\Theta}^T \mymat{R}_B \mymat{\Theta}^* \!+\! \sigma_w^2\mymat{I}_{L}\right)\! \otimes \!{\mymat{A}} \mymat{R}_A {\mymat{A}}^H\right) \!+ \!
		\frac{2(K_d+1) \gamma^2}{3 \tilde{M}^2} \mymat{I}_{L}\!\right)^{-1},
	\end{aligned}
	\end{equation}
	where $\gamma = \sqrt{\kappa/\tilde{P}} \cdot \sigma_{\max}$.
\end{theorem}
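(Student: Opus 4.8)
The plan is to eliminate $\mathbf{B}$ first, then collapse the resulting objective to a scalar double sum by a joint diagonalization, and finally optimize the remaining unitary and power degrees of freedom of $\mathbf{A}$. First, under $K_d\ge 2$ the dithered‑quantizer model \eqref{eqn:dither_scalar} lets us write $\mathbf{z}=\mathbf{u}+\mathbf{e}$ with $\mathbf{e}$ zero‑mean, white, uncorrelated with $\mathbf{u}$, and $\mathbb{E}\{\mathbf{e}\mathbf{e}^H\}=\sigma_e^2\mathbf{I}$, $\sigma_e^2=2(K_d+1)\gamma^2/(3\tilde P^{-1}\tilde P\tilde M^2)$, i.e. $\sigma_e^2 = 2(K_d+1)\gamma^2/(3\tilde M^2)$; the hypothesis $\eta<\sqrt{3/(2K_d)}\,\tilde M$ makes the self‑referential support equation \eqref{eqn:support_multiple} solvable, giving $\gamma^2=\kappa\max_i\mathbb{E}\{|(\mathbf{u})_i|^2\}$ with $\kappa>0$. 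Since $\hat{\mathbf{g}}=\mathbf{B}\mathbf{z}$ is linear, for any fixed $\mathbf{A}$ (and, in both strategies, with $\mathbf{B}$ formed per realization of $\mathbf{\Theta}$) the $\mathbf{B}$ minimizing the objective in \eqref{eqn:mse_g} is the linear MMSE estimator $\mathbf{B}=\mathbb{E}\{\mathbf{g}\mathbf{z}^H\}(\mathbb{E}\{\mathbf{z}\mathbf{z}^H\})^{-1}$; evaluating the two second moments from \eqref{eqn:system_model} via $\mathbf{R}_g=\mathbf{R}_B\otimes\mathbf{R}_A$, the noise covariance $\mathbf{I}_L\otimes\sigma_w^2\mathbf{R}_A$, and the identity $(\mathbf{I}_L\otimes\mathbf{A})(\mathbf{\Theta}^T\otimes\mathbf{I}_{N_r})=\mathbf{\Theta}^T\otimes\mathbf{A}$ gives $\mathbb{E}\{\mathbf{g}\mathbf{z}^H\}=(\mathbf{R}_B\mathbf{\Theta}^*)\otimes(\mathbf{R}_A\mathbf{A}^H)$ and $\mathbb{E}\{\mathbf{z}\mathbf{z}^H\}=(\mathbf{\Theta}^T\mathbf{R}_B\mathbf{\Theta}^*+\sigma_w^2\mathbf{I}_L)\otimes(\mathbf{A}\mathbf{R}_A\mathbf{A}^H)+\sigma_e^2\mathbf{I}$, from which \eqref{eqn:opt_B} follows after inserting $\sigma_e^2$. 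This part is routine.

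Second, I substitute this $\mathbf{B}$ back so that $N_tN_r\,\sigma_g^2=\mathrm{Tr}(\mathbf{R}_g)-T$ with $T:=\mathrm{Tr}[\mathbb{E}\{\mathbf{g}\mathbf{z}^H\}(\mathbb{E}\{\mathbf{z}\mathbf{z}^H\})^{-1}\mathbb{E}\{\mathbf{z}\mathbf{g}^H\}]$, and reduce $T$ to a scalar sum. The device is a joint diagonalization: write $\widetilde{\mathbf{A}}:=\mathbf{A}\mathbf{R}_A^{1/2}$ with SVD $\widetilde{\mathbf{A}}=\mathbf{U}_0\mathbf{\Sigma}_0\mathbf{X}^H$ (so $\mathbf{A}\mathbf{R}_A\mathbf{A}^H=\mathbf{U}_0\mathbf{\Sigma}_0\mathbf{\Sigma}_0^H\mathbf{U}_0^H$, using $\tilde P\le N_r$), and note $\mathbf{\Theta}^T\mathbf{R}_B\mathbf{\Theta}^*=\mathbf{C}^H\mathbf{C}$ with $\mathbf{C}:=\mathbf{R}_B^{1/2}\mathbf{\Theta}^*$, whose SVD $\mathbf{C}=\mathbf{U}'\mathbf{\Sigma}'\mathbf{F}^H$ has left singular matrix $\mathbf{U}'$ (the eigenmatrix of $\mathbf{R}_B^{1/2}\mathbf{\Theta}^*\mathbf{\Theta}^T(\mathbf{R}_B^{1/2})^H$), right singular matrix $\mathbf{F}$ diagonalizing $\mathbf{\Theta}^T\mathbf{R}_B\mathbf{\Theta}^*$, and squared singular values $\{\lambda'_{n_t}\}$. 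In the unitary basis $\mathbf{F}\otimes\mathbf{U}_0$, $\mathbb{E}\{\mathbf{z}\mathbf{z}^H\}$ is diagonal (the $\sigma_e^2\mathbf{I}$ term being inert under conjugation), while $\mathbb{E}\{\mathbf{g}\mathbf{z}^H\}(\mathbf{F}\otimes\mathbf{U}_0)$ becomes a Kronecker product whose two factors have column norms $\sqrt{\lambda'_{n_t}d_{B,n_t}}$ (with $d_{B,n_t}=(\mathbf{u}'_{n_t})^H\mathbf{R}_B\mathbf{u}'_{n_t}$) and $\bar\sigma_i\sqrt{\mathbf{v}_i^H\mathbf{R}_A\mathbf{v}_i}$ (with $\bar\sigma_i,\mathbf{v}_i$ the singular values and right singular vectors of $\widetilde{\mathbf{A}}$). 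Using $(\mathbf{M}_1\otimes\mathbf{M}_2)^H(\mathbf{M}_1\otimes\mathbf{M}_2)=(\mathbf{M}_1^H\mathbf{M}_1)\otimes(\mathbf{M}_2^H\mathbf{M}_2)$ and reading off diagonals, $T=\sum_{i=1}^{\tilde P}\sum_{n_t=1}^{N_t}\lambda'_{n_t}d_{B,n_t}\bar\sigma_i^2(\mathbf{v}_i^H\mathbf{R}_A\mathbf{v}_i)/[(\lambda'_{n_t}+\sigma_w^2)\bar\sigma_i^2+\sigma_e^2]$, which depends on $\mathbf{A}$ only through $\{\bar\sigma_i\}$, $\{\mathbf{v}_i\}$ and, via $\sigma_e^2$, the left singular vectors $\mathbf{U}_0$.

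Third, I maximize $T$ over $\mathbf{A}$. It is invariant under $\mathbf{A}\mapsto c\mathbf{A}$ (both $\bar\sigma_i^2$ and $\sigma_e^2$ scale by $c^2$), so impose $\sum_i\bar\sigma_i^2=1$. Then $\sigma_e^2$ enters only through $\gamma^2=\kappa\sigma_{\max}^2\max_m(\mathbf{U}_0\,\mathrm{diag}\{\bar\sigma_i^2\}\,\mathbf{U}_0^H)_{mm}\ge\kappa\sigma_{\max}^2/\tilde P$, the lower bound attained for every power profile by the unitary DFT \eqref{eqn:dft} (whose entries all have modulus $\tilde P^{-1/2}$); since $T$ is decreasing in $\sigma_e^2$, $\mathbf{U}_0=\mathbf{U}$ is optimal and pins $\sigma_e^2=\beta$ and $\gamma=\sqrt{\kappa/\tilde P}\,\sigma_{\max}$, where $\sigma_{\max}^2=\mathrm{Tr}(\mathbf{R}_\theta^*\mathbf{R}_B)+\sigma_w^2$ follows from computing $\mathbb{E}\{|(\mathbf{u})_i|^2\}=\sigma_{\max}^2(\mathbf{A}\mathbf{R}_A\mathbf{A}^H)_{mm}$ with the Kronecker‑model identity $\mathbb{E}\{\mathbf{G}\mathbf{R}_\theta\mathbf{G}^H\}=\mathrm{Tr}(\mathbf{R}_\theta^*\mathbf{R}_B)\,\mathbf{R}_A$. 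With $\sigma_e^2=\beta$ fixed, for any power profile (relabelled so that $w_i:=\sum_{n_t}\lambda'_{n_t}d_{B,n_t}\bar\sigma_i^2/[(\lambda'_{n_t}+\sigma_w^2)\bar\sigma_i^2+\beta]$, which is nondecreasing in $\bar\sigma_i$, is nonincreasing in $i$), Ky Fan's maximum principle (applied via Abel summation) shows $\sum_i w_i(\mathbf{v}_i^H\mathbf{R}_A\mathbf{v}_i)$ is maximized by taking $\mathbf{V}$ the eigenmatrix of $\mathbf{R}_A$, giving $\mathbf{v}_i^H\mathbf{R}_A\mathbf{v}_i=\lambda_{A,i}$; what remains is the power allocation $\max_{\sum_i\bar\sigma_i^2=1}\sum_{i,n_t}d_{B,n_t}\lambda_{A,i}\lambda'_{n_t}\bar\sigma_i^2/[(\lambda'_{n_t}+\sigma_w^2)\bar\sigma_i^2+\beta]$, which in $t_i=\bar\sigma_i^2$ has concave summands (hence is convex), whose optimizer has the $t_i$ weakly ordered as the $\lambda_{A,i}$ (so the relabelling is harmless), and which is exactly \eqref{eqn:opt_eig_DD}; the overall $\mathbf{A}=\widetilde{\mathbf{A}}\mathbf{R}_A^{-1/2}=\mathbf{U}\mathbf{\Lambda}\mathbf{V}^H\mathbf{R}_A^{-1/2}$ as claimed. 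For the DI strategy, $\mathbf{A}$ is a single matrix while \eqref{eqn:mse_g} additionally averages over $\mathbf{\Theta}$; the DFT/eigenmatrix structure and the value $\sigma_e^2=\beta$ are $\mathbf{\Theta}$‑independent and carry over verbatim, and taking $\mathbb{E}_{\mathbf{\Theta}}$ of the reduced objective before the power step yields the stochastic convex program \eqref{eqn:opt_eig_DI}.

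I expect the main obstacle to be the second step: choosing the substitutions ($\widetilde{\mathbf{A}}=\mathbf{A}\mathbf{R}_A^{1/2}$ together with the SVD of $\mathbf{R}_B^{1/2}\mathbf{\Theta}^*$) that simultaneously diagonalize $\mathbb{E}\{\mathbf{z}\mathbf{z}^H\}$, which is a Kronecker product perturbed by $\sigma_e^2\mathbf{I}$ and hence not itself a Kronecker product, and render the cross‑covariance separable, and then extracting the diagonals carefully enough that the weights $d_{B,n_t}$ and $\lambda_{A,i}$ emerge exactly as in \eqref{eqn:opt_eig_DD}. A secondary difficulty is that the support $\gamma$, and therefore $\sigma_e^2$, is itself a functional of $\mathbf{A}$; the scale‑invariance normalization and the constant‑modulus property of the DFT are precisely what break this circularity and let the power allocation be isolated as a convex problem.
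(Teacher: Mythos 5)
Your DD derivation is sound and essentially retraces the paper's route: eliminate $\mathbf{B}$ by linear MMSE, substitute $\bar{\mathbf{A}}=\mathbf{A}\mathbf{R}_A^{1/2}$ together with the SVD of $\mathbf{R}_B^{1/2}\mathbf{\Theta}^*$ to diagonalize the error expression, exploit scale invariance plus the constant-modulus DFT rotation to pin the support term at its lower bound $\gamma^2=\kappa\sigma_{\max}^2/\tilde P$ (the paper does this via a weak-majorization lemma, you do it by direct inspection of the diagonal entries, which is fine), and then handle the right-singular-vector step with Ky Fan/Abel summation where the paper cites Palomar's Theorem II.1 -- an acceptable, more self-contained substitute -- before arriving at the convex power allocation \eqref{eqn:opt_eig_DD}.

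The genuine gap is in the DI case. The reduced objective you reach before the power step still carries the eigenvector-dependent weights $d_{B,n_t}=(\mathbf{u}'_{n_t})^H\mathbf{R}_B\mathbf{u}'_{n_t}$, which are random functions of $\mathbf{\Theta}$ and a priori statistically coupled to the eigenvalues $\lambda'_{n_t}$. Taking $\mathbb{E}_{\mathbf{\Theta}}$ of that objective, as you propose, gives $\mathbb{E}_{\mathbf{\Theta}}\bigl\{\sum_{i,n_t} d_{B,n_t}\lambda_{A,i}\lambda'_{n_t}\sigma_i^2/[(\lambda'_{n_t}+\sigma_w^2)\sigma_i^2+\beta]\bigr\}$, which is not \eqref{eqn:opt_eig_DI}: the stated DI program contains no $d_{B,n_t}$ at all. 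Bridging the two requires the random-matrix argument the paper invokes: for the Wishart-type matrix $\mathbf{R}_B^{1/2}\mathbf{\Theta}^*\mathbf{\Theta}^T(\mathbf{R}_B^{1/2})^H$, the eigenvectors are independent of the eigenvalues and isotropically distributed, so $\mathbb{E}_{\mathbf{\Theta}}\{d_{B,n_t}\,f(\lambda'_{n_t})\}=\mathbb{E}_{\mathbf{\Theta}}\{d_{B,n_t}\}\,\mathbb{E}_{\mathbf{\Theta}}\{f(\lambda'_{n_t})\}$ and $\mathbb{E}_{\mathbf{\Theta}}\{\mathbf{u}'_{n_t}(\mathbf{u}'_{n_t})^H\}=\mathbf{I}_{N_t}/N_t$, hence $\mathbb{E}_{\mathbf{\Theta}}\{d_{B,n_t}\}=\operatorname{Tr}(\mathbf{R}_B)/N_t$, the same constant for every $n_t$, which can then be factored out and dropped from the maximization. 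Without this step the quantizer you derive solves a different (and not obviously tractable) stochastic program, so your claim that the DD derivation carries over verbatim to \eqref{eqn:opt_eig_DI} does not hold as written; everything else in the proposal stands.
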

\begin{proof}
	See Appendix \ref{app:proof_opt_quan}.
	\vspace{-0.2cm}
\end{proof}
\begin{remark}
	\label{remark:opt_quantizer}
	In Theorem \ref{theorem:opt}, we impose $K_d\geq 2$ so that the  quantization noise is uncorrelated to the input \cite{gray1993dithered}, thereby facilitating our analysis. However, Theorem \ref{theorem:opt} can also serve as a method to approximate the optimal quantizer in the case of $K_d = 0 \text{ or } 1$, and we will evaluate the sensing performance with numerical results by applying Theorem \ref{theorem:opt} to the case of $K_d = 0$, corresponding to no dither, in Section \ref{sec:result}. The condition $\eta < \sqrt{3/(2K_d)}\tilde{M}$ is imposed so that there exists a positive support $\gamma$ such that \eqref{eqn:support_multiple} holds. In the case of $K_d=0$, this inequality holds for any positive $\eta$.
\end{remark}

According to Theorem \ref{theorem:opt}, the optimal $\mymat{\Lambda}$ can be obtained by solving the convex optimization $\eqref{eqn:opt_eig_DD}$ in the \gls{DD} strategy. However, in the \gls{DI} strategy, although the optimization \eqref{eqn:opt_eig_DI} is also convex, it is difficult to directly solve \eqref{eqn:opt_eig_DI} since the objective function involves the expectation \gls{wrt} the random signal $\mymat{\Theta}$ and it is intractable to express the expectation explicitly. Therefore, we refer to the \gls{SAA} method \cite{kim2015guide} to solve this stochastic optimization, which will be described below.

First, we take $N_s$ samples of $\mymat{\Theta}$, denoted by $\mymat{\Theta}_1, \dots, \mymat{\Theta}_{N_s}$. Then, we get the eigenvalues of each $\mymat{R}_B^{\frac{1}{2}} \mymat{\Theta}_{n_s}^* \mymat{\Theta}_{n_s}^T (\mymat{R}_B^{\frac{1}{2}})^H,$ $n_s = 1,\dots, N_s$, denoted by $\{\lambda'_{n_s, n_t}\}_{n_t = 1}^{N_t}$.  The objective function of \eqref{eqn:opt_eig_DI} can be approximated as
\vspace{-0.2cm}
\begin{equation}
\begin{aligned}
&\mathbb{E}_{\mymat{\Theta}}\left\{
\sum_{i=1}^{\tilde{P}} \sum_{n_t=1}^{N_t} \frac{\lambda_{A,i} \lambda'_{n_t}\sigma^2_i}{(\lambda'_{n_t}+\sigma_w^2) \sigma_i^2 + \beta}
\right\}
\\ \approx  &
\frac{1}{N_s} 
\sum_{n_s=1}^{N_s} \sum_{i=1}^{\tilde{P}} \sum_{n_t=1}^{N_t} \frac{\lambda_{A,i} \lambda'_{n_s,n_t}\sigma^2_i}{(\lambda'_{n_s,n_t}+\sigma_w^2) \sigma_i^2 + \beta}
\\
= &\frac{1}{N_s} 
\sum_{n=1}^{N_s N_t} \sum_{i=1}^{\tilde{P}}
 \frac{\lambda_{A,i} \tilde{\lambda}'_{n}\sigma^2_i}{(\tilde{\lambda}'_{n}+\sigma_w^2) \sigma_i^2 + \beta},
\end{aligned}
\vspace{-0.2cm}
\end{equation}
where $\{\tilde{\lambda}'_{n}\}_{n = 1}^{N_s N_t}$ contains all $\lambda'_{n_s, n_t}$ by setting $\tilde{\lambda}'_{(n_s-1) N_t + n_t} = \lambda'_{n_s, n_t}$. Therefore, the optimization of $\mymat{\Lambda}$ in the \gls{DI} strategy with \gls{SAA} method is given by the following convex optimization:
\vspace{-0.2cm}
\begin{equation} \label{eqn:opt_eig_DI_SAA}
\begin{aligned}
\left\{\bar{\sigma}_i\right\}_{i=1}^{\tilde{P}}= & \underset{\left\{\sigma_i\right\}_{i=1}^{\tilde{P}}}{\arg \max } \frac{1}{N_s} 
\sum_{n=1}^{N_s N_t} \sum_{i=1}^{\tilde{P}}
\frac{\lambda_{A,i} \tilde{\lambda}'_{n}\sigma^2_i}{(\tilde{\lambda}'_{n}+\sigma_w^2) \sigma_i^2 + \beta}, \\
& \text { subject to }  \sum_{i=1}^{\tilde{P}} \sigma_i^2 = 1. 
\end{aligned}
\vspace{-0.2cm}
\end{equation}

\section{Numerical Results}
\label{sec:result}
In this section, we evaluate the performance of \gls{ISAC} sensing systems that utilize quantizers proposed in Section \ref{sec:quantizer_structure} with numerical results. First, we introduce the parameter configurations. Then, we investigate the relationship between sensing performance and analog combining ratio $r$ to illustrate the trade-off between quantization output size $\tilde{P}$ and quantization resolution $\tilde{M}$, as addressed in Remark \ref{remark:tradeoff}. Finally, we compare the performance of the proposed \gls{DD} and \gls{DI} quantization strategies with other benchmarks with different quantization rate $R$.

We consider an \gls{ISAC} \gls{BS} performing \gls{MIMO} sensing. The number of transmitting and receiving antennas are $N_t = 6$ and $N_r = 20$. The number of snapshots for each \gls{TIR} sensing is $L = 40$. The spatial correlation matrices at the receiving and transmission array, i.e., $\mymat{R}_A$ and $\mymat{R}_B$, follow the Jakes model \cite{jakes1994microwave} by setting $\left(\mymat{R}_A\right)_{n_1,n_2} = J_0(\pi\abs{n_1-n_2})$ and $\left(\mymat{R}_B\right)_{n_1,n_2} = J_0(0.8\pi\abs{n_1-n_2})$, where $J_0$ is the zero-order Bessel function of the first type. The signal $\mymat{\Theta}$ is generated with $\mymat{\Theta} = \mymat{W}_{\text{pre}} \mymat{\Theta}_0$, where $\mymat{W}_{\text{pre}} \in \mathbb{C}^{N_t\times N_t}$ is a fixed precoding matrix and each entry of $\mymat{\Theta}_0$ follows $\mathcal{CN}(0,1)$ in an \gls{iid} manner. Thus, the correlation of each snapshot is $\mymat{R}_{\theta} =  \mymat{W}_{\text{pre}} \mymat{W}_{\text{pre}}^H$. The noise variance is $\sigma_w^2 = 10^{-3}$.

As discussed in Remark \ref{remark:tradeoff}, the analog combining ratio $r$ influences the sensing performance, which is illustrated by the results in Fig.~\ref{fig:mse_g_combining_ratio}. First, the case of no quantization is simulated as a benchmark. When designing the quantizer, we fix $\eta = 2$. The number of scalar \glspl{ADC} $\tilde{P}$ varies from $1$ to $N_r$, corresponding to $r$ from $0.05$ to $1$. The quantizers are designed with the \gls{DD} and \gls{DI} strategies, colored as blue and red in Fig.~\ref{fig:mse_g_combining_ratio}, respectively. In the \gls{DI} strategy, we set $N_s=10^4$ when solving \eqref{eqn:opt_eig_DI_SAA}. The quantization rate $R$ is set as 2 and 4. Additionally, as stated in Remark \ref{remark:opt_quantizer}, Theorem \ref{theorem:opt} can also serve as a method to approximate the optimal quantizer in the case of $K_d = 0 \text{ or } 1$. In Fig.~\ref{fig:mse_g_combining_ratio}, therefore, we examine the cases of $K_d=2$ and $K_d = 0$ (corresponding to no dither).  Each point on the curves is obtained by performing $N_{\text{sim}} = 1000$ Monte Carlo experiments and calculating the average \gls{MSE} of channel estimation $\sigma_g^2$ as in \eqref{eqn:mse_g}. The average \gls{MSE}  with no quantization is $1.60\times 10^{-3}$. The following discussions and observations are made on Fig.~\ref{fig:mse_g_combining_ratio}: First, in the case of $K_d=2$ and $R=2$, corresponding to circle markers, the combining ratio $r$ is no greater than 0.6. The reason is that when $r>0.6$, the value of $\kappa = \eta^2 \left(1-\frac{2 K_d\eta^2}{3\tilde{M}^2}\right)^{-1}$ is negative, meaning that there does not exist a support $\gamma$ of scalar \glspl{ADC} for \eqref{eqn:support_multiple} to hold, as stated in Remark \ref{remark:opt_quantizer}. Second, for the same quantizer design strategy and $R$, the average \gls{MSE} without dither is less than that with dither. The reason is that the addition of dither signals increases the quantization noise level of each \gls{ADC}. Third, the average \gls{MSE} of the \gls{DI} strategy is very close to that of the \gls{DD} strategy, meaning that the \gls{DI} strategy can greatly reduce the hardware and computational complexity at the price of a slight degradation of sensing performance. Fourth, the average \gls{MSE} reaches its minimum with $r=1$ except the case of $R=2$ with dither. That is, no dimension compression is preferred in terms of sensing performance when using the spatial analog combining, which is similar to the results in \cite{shlezinger2019asymptotic} and will be further examined in the next numerical results.

\begin{figure}[htb]
	\begin{minipage}[b]{1.0\linewidth}
		\centering
		\centerline{\includegraphics[width=8cm]{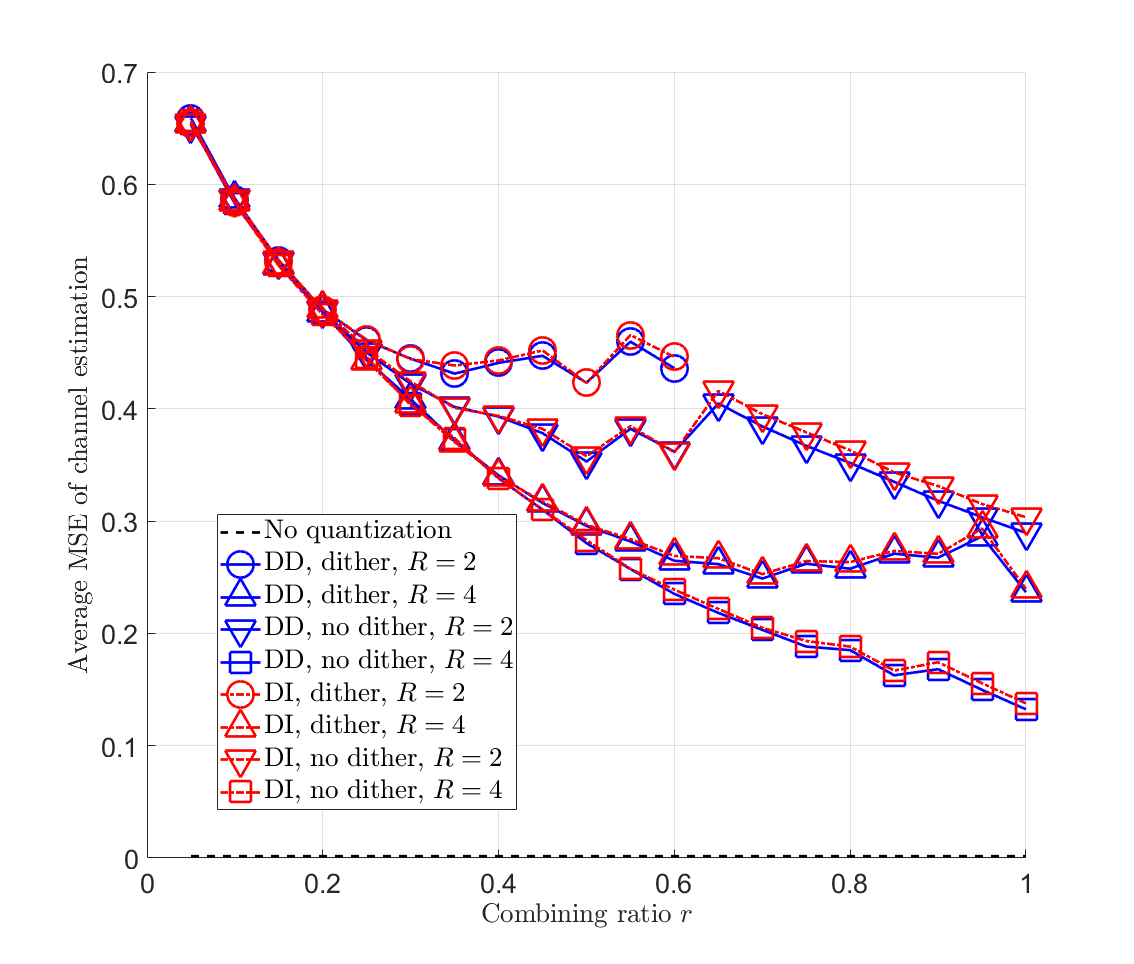}}
	\end{minipage}
	\caption{Average \gls{MSE} $\sigma_g^2$ as a function of combining ration $r$ with different quantizer design strategies, dither schemes and quantization ratios.}
	\label{fig:mse_g_combining_ratio}
\end{figure}

Next, we evaluate the sensing performance with varying quantization rate $R$ and different quantizer design methods. The simulated rates $R$ are the integers from 2 to 16. We only consider the case of $K_d = 0$, i.e., no dither, in both \gls{DD} and \gls{DI} strategies to achieve better sensing performance. By conducting simulations similar to Fig.~\ref{fig:mse_g_combining_ratio}, we find that the optimal combining ratio $r$ is $1$ for any $R$ in the no dither quantizers, indicating that no dimension compression is preferred with spatial analog combining again. Additionally, we also evaluate the quantization structure where no analog spatial combining is performed, which is equivalent to $\mymat{A} = \mymat{I}_{N_r}$ and also referred to as task-ignorant digital only quantization \cite{shlezinger2019asymptotic}. The following discussions and observations are made on Fig.~\ref{fig:mse_g_quantization_rate}: First, as in Fig.~\ref{fig:mse_g_combining_ratio}, the average \gls{MSE} of the \gls{DI} strategy is very close to that of \gls{DD}, especially with high quantization rates. Second, the proposed task-based \gls{DD} and \gls{DI} quantizations outperform the task-ignorant digital only quantization in terms of sensing performance for most of the quantization rates. Specifically,  when $R$ is from $5$ to $13$, the average \gls{MSE} can be reduced by $1.1\sim 1.5$ dB with the \gls{DI} strategy compared with digital-only.

\begin{figure}[htb]
	\begin{minipage}[b]{1.0\linewidth}
		\centering
		\centerline{\includegraphics[width=8cm]{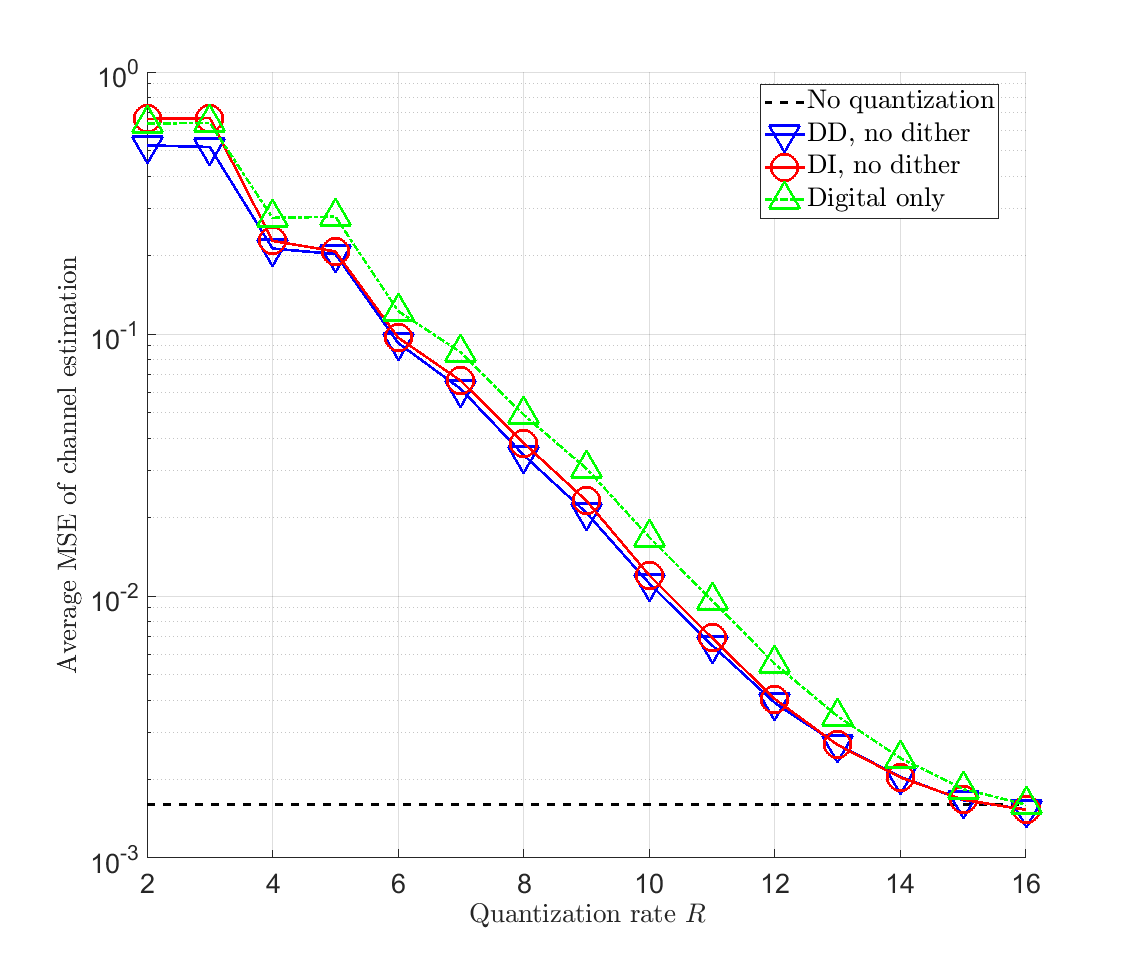}}
	\end{minipage}
	\caption{Average \gls{MSE} $\sigma_g^2$ as a function of combining ration $r$ with different quantizer design strategies, dither schemes and quantization ratios.}
	\label{fig:mse_g_quantization_rate}
\end{figure}

\section{Conclusion and Discussion}
\label{sec:discussion}

In this paper, we address the challenge of quantizer design in MIMO ISAC systems with random signals. We propose two strategies for quantizer optimization: data-dependent (DD) and data-independent (DI). Both strategies aims at minimizing the MSE of TIR estimation, adapting the quantizer design to the random signaling of ISAC systems. We theoretically derive the optimal quantizers for both strategies and propose an \gls{SAA}-based algorithm to solve the optimization problem in \gls{DI} strategy.
Our results demonstrates that the DI strategy, despite its lower computational complexity compared to the DD strategy, achieved near-optimal sensing performance. This finding is significant as it suggests that practical ISAC systems can employ efficient quantizer designs without substantial performance degradation. The study also reveals that the proposed quantizers outperform digital-only quantization in terms of sensing performance. 

%
\bibliographystyle{IEEEtran}
\bibliography{IEEEabrv,2D}

\begin{thebibliography}{10}
\providecommand{\url}[1]{#1}
\csname url@samestyle\endcsname
\providecommand{\newblock}{\relax}
\providecommand{\bibinfo}[2]{#2}
\providecommand{\BIBentrySTDinterwordspacing}{\spaceskip=0pt\relax}
\providecommand{\BIBentryALTinterwordstretchfactor}{4}
\providecommand{\BIBentryALTinterwordspacing}{\spaceskip=\fontdimen2\font plus
\BIBentryALTinterwordstretchfactor\fontdimen3\font minus
  \fontdimen4\font\relax}
\providecommand{\BIBforeignlanguage}[2]{{%
\expandafter\ifx\csname l@#1\endcsname\relax
\typeout{** WARNING: IEEEtran.bst: No hyphenation pattern has been}%
\typeout{** loaded for the language `#1'. Using the pattern for}%
\typeout{** the default language instead.}%
\else
\language=\csname l@#1\endcsname
\fi
#2}}
\providecommand{\BIBdecl}{\relax}
\BIBdecl

\bibitem{liu2022integrated}
F.~Liu, Y.~Cui, C.~Masouros, J.~Xu, T.~X. Han, Y.~C. Eldar, and S.~Buzzi,
  ``Integrated sensing and communications: Toward dual-functional wireless
  networks for 6g and beyond,'' \emph{IEEE J. Sel. Areas Commun.}, vol.~40,
  no.~6, pp. 1728--1767, 2022.

\bibitem{gao2022integrated}
Z.~Gao, Z.~Wan, D.~Zheng, S.~Tan, C.~Masouros, D.~W.~K. Ng, and S.~Chen,
  ``Integrated sensing and communication with {mmWave} massive {MIMO}: A
  compressed sampling perspective,'' \emph{IEEE Trans. Wireless Commun.},
  vol.~22, no.~3, pp. 1745--1762, 2022.

\bibitem{lu2023random}
S.~Lu, F.~Liu, F.~Dong, Y.~Xiong, J.~Xu, Y.-F. Liu, and S.~Jin, ``Random {ISAC}
  signals deserve dedicated precoding,'' \emph{arXiv preprint
  arXiv:2311.01822}, 2023.

\bibitem{zhang2021overview}
J.~A. Zhang, F.~Liu, C.~Masouros, R.~W. Heath, Z.~Feng, L.~Zheng, and
  A.~Petropulu, ``An overview of signal processing techniques for joint
  communication and radar sensing,'' \emph{IEEE J. Sel. Topics Signal
  Process.}, vol.~15, no.~6, pp. 1295--1315, 2021.

\bibitem{xie2023novel}
Q.~Xie, C.~Liu, Z.~Mo, and W.~Li, ``A novel pulse-agile waveform design based
  on random {FM} waveforms for range sidelobe suppression and range ambiguity
  mitigation,'' \emph{IEEE Trans. Geosci. Remote Sensing}, vol.~61, pp. 1--12,
  2023.

\bibitem{gray1998quantization}
R.~M. Gray and D.~L. Neuhoff, ``Quantization,'' \emph{IEEE Trans. Inf. Theory},
  vol.~44, no.~6, pp. 2325--2383, 1998.

\bibitem{shlezinger2019asymptotic}
N.~Shlezinger, Y.~C. Eldar, and M.~R. Rodrigues, ``Asymptotic task-based
  quantization with application to massive {MIMO},'' \emph{IEEE Trans. Signal
  Process.}, vol.~67, no.~15, pp. 3995--4012, 2019.

\bibitem{ma2021bit}
D.~Ma, N.~Shlezinger, T.~Huang, Y.~Liu, and Y.~C. Eldar, ``Bit constrained
  communication receivers in joint radar communications systems,'' in
  \emph{2021 IEEE Int. Conf. Acoust. Speech Signal Process.}\hskip 1em plus
  0.5em minus 0.4em\relax IEEE, 2021, pp. 8243--8247.

\bibitem{shlezinger2019hardware}
N.~Shlezinger, Y.~C. Eldar, and M.~R. Rodrigues, ``Hardware-limited task-based
  quantization,'' \emph{IEEE Trans. Signal Process.}, vol.~67, no.~20, pp.
  5223--5238, 2019.

\bibitem{weichselberger2006stochastic}
W.~Weichselberger, M.~Herdin, H.~Ozcelik, and E.~Bonek, ``A stochastic {MIMO}
  channel model with joint correlation of both link ends,'' \emph{IEEE Trans.
  Wireless Commun.}, vol.~5, no.~1, pp. 90--100, 2006.

\bibitem{kermoal2002stochastic}
J.-P. Kermoal, L.~Schumacher, K.~I. Pedersen, P.~E. Mogensen, and
  F.~Frederiksen, ``A stochastic {MIMO} radio channel model with experimental
  validation,'' \emph{IEEE J. Sel. Areas Commun.}, vol.~20, no.~6, pp.
  1211--1226, 2002.

\bibitem{gray1993dithered}
R.~Gray and T.~Stockham, ``Dithered quantizers,'' \emph{IEEE Trans. Inf.
  Theory}, vol.~39, no.~3, pp. 805--812, 1993.

\bibitem{widrow1996statistical}
B.~Widrow, I.~Kollar, and M.-C. Liu, ``Statistical theory of quantization,''
  \emph{IEEE Trans. Instrum. Meas.}, vol.~45, no.~2, pp. 353--361, 1996.

\bibitem{kim2015guide}
S.~Kim, R.~Pasupathy, and S.~G. Henderson, ``A guide to sample average
  approximation,'' \emph{Handbook of Simulation Optimization}, pp. 207--243,
  2015.

\bibitem{jakes1994microwave}
W.~C. Jakes and D.~C. Cox, \emph{Microwave mobile communications}.\hskip 1em
  plus 0.5em minus 0.4em\relax Wiley-IEEE press, 1994.

\bibitem{palomar2007mimo}
D.~P. Palomar and Y.~Jiang, \emph{{MIMO} Transceiver Design via Majorization
  Theory}.\hskip 1em plus 0.5em minus 0.4em\relax Delft, The Netherlands: Now
  Publishers, 2007.

\bibitem{couillet2011random}
R.~Couillet and M.~Debbah, \emph{Random Matrix Methods for Wireless
  Communications}.\hskip 1em plus 0.5em minus 0.4em\relax Cambridge University
  Press, 2011.

\end{thebibliography}
\clearpage
\newpage
\appendices
\section{Proof of Theorem \ref{theorem:opt}}
\label{app:proof_opt_quan}
To prove Theorem \ref{theorem:opt}, we first characterize the \gls{MMSE} of \gls{TIR} estimation. Second, we derive  the optimal unitary rotation $\mymat{U}$ for a given $\mymat{A}$ as in \cite[Appendix C]{shlezinger2019hardware}. Third, we find the solution to the optimal $\mymat{V}$ and $\mymat{\Lambda}$ with a combination of \gls{SVD} and convex optimization.

We first fix $\mymat{A}$ and find the corresponding optimal $\mymat{B}$. When $K_d \geq 2$ and the input plus dither signals is within the quantizer's support $\gamma$, the output of dithered quantizer is the sum of the input and an uncorrelated white quantization noise with variance $\frac{2 (K_d+1) \gamma^2}{3 \tilde{M}^2} $ \cite{gray1993dithered}. Therefore, for a given $\mymat{A}$, the digital processing matrix yielding linear \gls{MMSE} estimation is given by \eqref{eqn:opt_B} \cite[Proposition 4]{shlezinger2019asymptotic} and the corresponding average \gls{MSE} is 
\begin{equation} \label{eqn:mse_g_theta}
\begin{aligned}
& \sigma^2_{g|\Theta}\left(\mymat{A}\right) 
=\frac{1}{N_t N_r} \operatorname{Tr}\left(\mymat{R}_B \otimes \mymat{R}_A \right) \\
&-\frac{1}{N_t N_r} \operatorname{Tr}\left(\left(\mymat{\Theta}^T \mymat{R}_B^2 \mymat{\Theta}^* \otimes \mymat{A} \mymat{R}_A^2 \mymat{A}^H\right)\right. \\
& \left.\times\!\left(\!\left(\!\left(\!\mymat{\Theta}^T \mymat{R}_B \mymat{\Theta}^* \!+\! \sigma_w^2\mymat{I}_{L}\!\right) \!\otimes\! {\mymat{A}} \mymat{R}_A {\mymat{A}}^H \!\right)\!+\!
\frac{2 (K_d\!+\!1) \gamma^2}{3 \tilde{M}^2} 
\mymat{I}_{L \tilde{P}}\!\right)^{-1}\!\right).
\end{aligned}
\end{equation}
Note that $\sigma^2_{g|\Theta}$ in \eqref{eqn:mse_g_theta} is different from  $\sigma^2_{g}$ in \eqref{eqn:mse_g} in that $\sigma^2_{g|\Theta}$ depends on the signal $\mymat{\Theta}$. Thus, we have $\sigma^2_{g} = E_D\left(\sigma^2_{g|\Theta}\right)$ for both strategies. 

Next, we obtain the support $\gamma$ of the quantizer which is decided by \eqref{eqn:support_multiple}. To facilitate this, we first derive the correlation of $\myvec{y}$ as follows:
\begin{equation} \label{eqn:cor_y}
\begin{aligned}
\mymat{\Sigma}_y & \triangleq\myexpect{\myvec{y}\myvec{y}^H}  = \left( \myexpect{\mymat{\Theta}^T \mymat{R}_B \mymat{\Theta}^*} + \sigma_w^2 \mymat{I}_L \right) \otimes \mymat{R}_A \\
&  {=} \mymat{\Sigma}_0 \otimes \mymat{R}_A, 
\end{aligned}
\end{equation}
where
\begin{equation}
\begin{aligned}
\mymat{\Sigma}_0 \triangleq \myexpect{\mymat{\Theta}^T \mymat{R}_B \mymat{\Theta}^*} + \sigma_w^2 \mymat{I}_L 
\overset{(a)}{=} \sigma_{\max}^2 \mymat{I}_L,
\end{aligned}
\end{equation}
where $\sigma_{\max}^2$ is provided in \eqref{eqn:sigma_max} and $(a)$ holds since
\begin{equation} 
\begin{aligned}
&\myexpect{\left(\mymat{\Theta}^T \mymat{R}_B \mymat{\Theta}^*\right)_{l_1,l_2}} 
= \myexpect{\myvec{\theta}_{l_1}^T \mymat{R}_B \myvec{\theta}_{l_2}^*}\\
= &\myexpect{\mytr\left(\myvec{\theta}_{l_2}^*\myvec{\theta}_{l_1}^T \mymat{R}_B \right)}
= \mytr\left(\mymat{R}_{\theta}^* \mymat{R}_B \right) \delta(l_1-l_2),
\end{aligned}
\end{equation}
where $\delta(l)$ is the Kronecker delta function. Then, the correlation of $\myvec{u}$ is given by
\begin{equation} \label{eqn:cor_u}
\mymat{\Sigma}_u \triangleq \myexpect{\myvec{u}\myvec{u}^H} 
= \mymat{\Sigma}_0
\otimes \mymat{A}\mymat{R}_A\mymat{A}^H.
\end{equation}
According to \cite[Equation (D.3)]{shlezinger2019asymptotic}, in order for \eqref{eqn:support_multiple} to hold, we have
\begin{equation} \label{eqn:support_result}
\begin{aligned}
\gamma^2 &= \kappa \max_{i = 1,\dots, L\tilde{P}} \left(\mymat{\Sigma}_u\right)_{i,i} = \kappa \sigma^2_{\max} \underset{i = 1,\dots, \tilde{P}}{\max} \left(\mymat{A}\mymat{R}_A\mymat{A}^H\right)_{i,i},
\end{aligned}
\end{equation}
where $\kappa = \eta^2 \left(1-\frac{2 K_d\eta^2}{3\tilde{M}^2}\right)^{-1}$. 
Defining $\mymatb{A} = \mymat{A}\mymat{R}_A^{\frac 12}$ and substituting \eqref{eqn:support_result} into \eqref{eqn:mse_g_theta} yields
\begin{equation} 
\label{eqn:mse_g_theta2}
\begin{aligned}
& \sigma^2_g\left(\mymatb{A}\right) = \myexpectd{\sigma^2_{g|\Theta}\left(\mymatb{A}\right)} =\frac{1}{N_t N_r} \operatorname{Tr}\left(\mymat{R}_B \otimes \mymat{R}_A \right) \\
& -\frac{1}{N_t N_r} \operatorname{E_D}\Biggl(\operatorname{Tr}\Biggl(\Bigl(\mymat{\Theta}^T \mymat{R}_B^2 \mymat{\Theta}^* \otimes \mymatb{A} \mymat{R}_A \mymatb{A}^H\Bigr) \\
& \times\Biggl(\Bigl(\mymat{\Theta}^T \mymat{R}_B \mymat{\Theta}^* + \sigma_w^2\mymat{I}_{L}\Bigr) \otimes {\mymatb{A}} {\mymatb{A}}^H \\
& +\frac{2(K_d+1) \kappa \sigma^2_{\max} \underset{i = 1,\dots, \tilde{P}}{\max} \left(\mymatb{A}\mymatb{A}^H\right)_{i,i}}{3 \tilde{M}^2} \mymat{I}_{L \tilde{P}}\Biggr)^{-1}\Biggr)\Biggr).
\end{aligned}
\end{equation}
According to \cite[Lemma D.1]{shlezinger2019asymptotic}, for any matrix $\mymatb{A}$, there exists a unitary matrix $\mymat{U}\in \mathbb{C}^{\tilde{P}\times\tilde{P}}$ such that $\mymat{U}\mymatb{A}\mymatb{A}^H\mymatb{U}^H$ is weakly majorized by all possible rotations of $\mymatb{A}\mymatb{A}^H$, i.e., all diagonal entries of $\mymat{U}\mymatb{A}\mymatb{A}^H\mymatb{U}^H$ are equal to $\frac{1}{\tilde{P}} \mytr\left(\mymatb{A}\mymatb{A}^H\right)$. As a result, $\max_{i = 1,\dots, \tilde{P}}\left(\mymat{U}\mymatb{A}\mymatb{A}^H\mymatb{U}^H\right)_{i,i}$ in \eqref{eqn:mse_g_theta2} is also equal to  $\frac{1}{\tilde{P}} \mytr\left(\mymatb{A}\mymatb{A}^H\right)$. Such a matrix  $\mymat{U}$ yields the minimum $\sigma^2_{g}$ among all the rotations of  $\mymatb{A}\mymatb{A}^H$, corresponding to $\sigma^2_{g}$ given by
\begin{equation} \label{eqn:mse_g_theta_min}
\begin{aligned}
\sigma^2_{g}\left(\mymat{U}\mymatb{A}\right) 
 =\frac{1}{N_t N_r} \operatorname{Tr}\left(\mymat{R}_B \otimes \mymat{R}_A \right)-\frac{1}{N_t N_r} f_A,
\end{aligned}
\end{equation}
where
\begin{equation} \label{eqn:obj_func}
\begin{aligned}
& f_A =  \operatorname{E_D}\left(\operatorname{Tr}\left(\left(\mymat{\Theta}^T \mymat{R}_B^2 \mymat{\Theta}^* \otimes \mymatb{A} \mymat{R}_A \mymatb{A}^H\right)\right. \right.\\
& \left.\left.\cdot\!\left(\left(\left(\mymat{\Theta}^T \mymat{R}_B \mymat{\Theta}^* \!+\! \sigma_w^2\mymat{I}_{L}\right) \!\otimes\! \mymatb{A}\mymatb{A}^H\right)\!+\!\beta \mytr\left(\mymatb{A}\mymatb{A}^H\right) \mymat{I}_{L \tilde{P}}\right)^{-1}\right)\right),
\end{aligned}
\end{equation}
where $\beta = \frac{2(K_d+1)\kappa \sigma_{\max}^2}{3\tilde{M}^2 \tilde{P}}$. 
We can now optimize $\mymatb{A}$ such that \eqref{eqn:mse_g_theta_min} reaches its minimum, and after the optimal $\mymatb{A}$ is obtained, we will show that the unitary \gls{DFT} matrix given by \eqref{eqn:dft} is the optimal $\mymat{U}$ among all rotations. We perform \gls{SVD} on $\mymatb{A}$, i.e., $\mymatb{A} = \mymat{U}_A \mymat{\Lambda} \mymat{V}^H$, where $\mymat{U}_A \in \mathbb{C}^{\tilde{P}\times \tilde{P}}$ and $\mymat{V} \in \mathbb{C}^{N_r \times N_r}$ are unitary matrices, and $\mymat{\Lambda} \in \mathbb{C}^{\tilde{P} \times N_r}$ is a diagonal matrix with non-negative diagonal entries $\{\sigma_{i}\}_{i=1}^{\tilde{P}}$. Substituting this \gls{SVD} into \eqref{eqn:mse_g_theta_min}, we can transform the minimization of \eqref{eqn:mse_g_theta_min} into 
\begin{equation}
\label{eqn:opt_svd}
\begin{aligned}
\underset{\mymat{\Lambda}, \mymat{V}}{\max } f_A(\mymat{\Lambda}, \mymat{V}), 
\end{aligned}
\end{equation}
where $f_A$ can be transformed into
\begin{equation} \label{eqn:obj_func_svd}
\begin{aligned}
& f_A =  \operatorname{E_D}\left(\operatorname{Tr}\left(\left(\mymat{\Theta}^T \mymat{R}_B^2 \mymat{\Theta}^* \otimes \mymat{\Lambda}\mymat{V}^H \mymat{R}_A \mymat{V}\mymat{\Lambda}^H\right)\right. \right.\\
& \left.\left.\cdot\!\left(\left(\left(\mymat{\Theta}^T \mymat{R}_B \mymat{\Theta}^* \!+\! \sigma_w^2\mymat{I}_{L}\right) \!\otimes\! \mymat{\Lambda}\mymat{\Lambda}^H\right)\!+\!\beta \mytr\left(\mymat{\Lambda}\mymat{\Lambda}^H\right) \mymat{I}_{L \tilde{P}}\right)^{-1}\right)\right).
\end{aligned}
\end{equation}
It is indicated by \eqref{eqn:obj_func_svd} that $f_A$ is invariant of the left singular matrix $\mymat{U}_A$. Thus, $\mymat{U}_A$ is dropped in the optimization \eqref{eqn:opt_svd}.

To make the maximization of $f_A$ further tractable, we perform \gls{SVD} on $\mymat{R}_B^{\frac{1}{2}} \mymat{\Theta}^*$, i.e., $\mymat{R}_B^{\frac{1}{2}} \mymat{\Theta}^* = \mymat{U}' \mymat{\Lambda}' \mymat{V}'^H$, where $\mymat{U}' \in \mathbb{C}^{{N_t}\times N_t}$ and $\mymat{V}' \in \mathbb{C}^{L \times L}$ are unitary matrices, and $\mymat{\Lambda}' \in \mathbb{C}^{N_t \times L}$ is a diagonal matrix with non-negative diagonal entries $\{\sigma'_{i}\}_{i=1}^{N_t}$. Then, $f_A$ can be transformed into
\begin{equation} \label{eqn:obj_func_svd2}
\begin{aligned}
& f_A =  \operatorname{E_D}\left(\operatorname{Tr}\left(\left(\mymat{U}'^H \mymat{R}_B \mymat{U}' \otimes \mymat{V}^H \mymat{R}_A \mymat{V}\right)\mymat{M}\right)\right),
\end{aligned}
\end{equation}
where
\begin{equation} \label{eqn:diag_M}
\begin{aligned}
 \mymat{M} \!=\! &  
\left(\mymat{\Lambda}' \!\otimes\! \mymat{\Lambda}^H\right)\!
\left(\!\left(\!\left(\!\mymat{\Lambda}'^H\! \mymat{\Lambda}' \!+\! \sigma_w^2\mymat{I}_{L}\right) \!\otimes\! \mymat{\Lambda}\mymat{\Lambda}^H\right)\!+\!\beta \mytr\!\left(\!\mymat{\Lambda}\!\mymat{\Lambda}^H\!\right)\! \mymat{I}_{L \tilde{P}}\!\right)\!^{-1}\\
&\times\left(\mymat{\Lambda}'^H \otimes \mymat{\Lambda}\right).
\end{aligned}
\end{equation}
Note that $\mymat{M}$ is a diagonal matrix, which enables us to transform $f_A$ into the following form:
\begin{equation} \label{eqn:obj_func_sum}
\begin{aligned}
f_A = \operatorname{E_D}\left(\sum_{n_t=1}^{N_t}\sum_{i = 1}^{\tilde{P}}
\frac{d_{B, n_t} d_{A,i} \lambda'_{n_t} {\sigma}_i^2}
{(\lambda'_{n_t}+\sigma_w^2){\sigma}_i^2 + \beta\sum_{q=1}^{\tilde{P}} {\sigma}_q^2}
\right),
\end{aligned}
\end{equation}
where $\{d_{B,i}\}_{i=1}^{N_t}$ and $\{d_{A,i}\}_{i=1}^{N_r}$ are the non-negative diagonal entries of $\mymat{U}'^H \mymat{R}_B \mymat{U}' $ and $\mymat{V}^H \mymat{R}_A \mymat{V}$, respectively. We define $\lambda'_{n_t} \triangleq \sigma'^2_{n_t}$ and  $\{\lambda'_{n_t}\}_{n_t = 1}^{N_t}$ can be interpreted as the eigenvalues of $\mymat{R}_B^{\frac{1}{2}} \mymat{\Theta}^* \mymat{\Theta}^T (\mymat{R}_B^{\frac{1}{2}})^H$. Then, it follows
from \cite[Theorem II.1]{palomar2007mimo} that \eqref{eqn:obj_func_sum} is maximized by setting $\mymat{V}$ to
be the eigenmatrix of $\mymat{R_A}$ that yields $\mymat{V}^H \mymat{R}_A \mymat{V}$ as a diagonal matrix with non-negative diagonal entries $\{\lambda_{A,i}\}_{i=1}^{N_r}$ in the descending order. Then, we also have $d_{A,i}= \lambda_{A,i}, i=1,\dots, N_r$.

Now that the optimal $\mymat{V}$ in \eqref{eqn:opt_svd} is decided, the remaining step is to optimize $\mymat{\Lambda}$, or equivalently, the diagonal entries $\{\sigma_{i}\}_{i=1}^{\tilde{P}}$ based on the objective function given by \eqref{eqn:obj_func_sum}. Note that in \eqref{eqn:obj_func_sum},  $\{d_{B,i}\}_{i=1}^{N_t}$ depend on the left singular matrix of $\mymat{R}_B^{\frac{1}{2}} \mymat{\Theta}^*$ and $\{\lambda'_{n_t}\}_{n_t = 1}^{N_t}$ are eigenvalues of $\mymat{R}_B^{\frac{1}{2}} \mymat{\Theta}^* \mymat{\Theta}^T (\mymat{R}_B^{\frac{1}{2}})^H$. Since the signal $\mymat{\Theta}$ is treated with different manners in \gls{DD} and \gls{DI} strategies, the transformation of optimizing $\mymat{\Lambda}$ is also split according to the two strategies. 

In the \gls{DD} strategy,  $\{d_{B,i}\}_{i=1}^{N_t}$ and $\{\lambda'_{n_t}\}_{n_t = 1}^{N_t}$ are known deterministic variables. Additionally, note that the value of $f_A$ remains unchanged after multiplying $\{\sigma_{i}\}_{i=1}^{\tilde{P}}$ by the same positive scalar $\alpha$. Thus, we can assume that $\sum_{i=1}^{\tilde{P}} {\sigma}_i^2 = 1$ in its optimization. Then, by replacing $\{d_{A,i}\}_{i=1}^{N_r}$ with the optimal solution $\{\lambda_{A,i}\}_{i=1}^{N_r}$ in \eqref{eqn:obj_func_sum}, the optimization of $\{\sigma_{i}\}_{i=1}^{\tilde{P}}$ in the \gls{DD} strategy is given by \eqref{eqn:opt_eig_DD}.

In the \gls{DI} strategy, $\{d_{B,i}\}_{i=1}^{N_t}$ and $\{\lambda'_{n_t}\}_{n_t = 1}^{N_t}$ are random variables. By replacing $\{d_{A,i}\}_{i=1}^{N_r}$ with $\{\lambda_{A,i}\}_{i=1}^{N_r}$ and assuming that $\sum_{i=1}^{\tilde{P}} {\sigma}_i^2 = 1$, $f_A$ can be further transformed as follows:
\begin{equation} \label{eqn:obj_func_DI}
\begin{aligned}
f_A &= \myexpecttheta{\sum_{n_t=1}^{N_t}\sum_{i = 1}^{\tilde{P}}
\frac{d_{B, n_t} \lambda_{A,i} \lambda'_{n_t} {\sigma}_p^2}
{(\lambda'_{n_t}+\sigma_w^2){\sigma}_i^2 + \beta}}
,\\
& \overset{(a)}{=}\sum_{n_t=1}^{N_t}\sum_{i = 1}^{\tilde{P}}
\myexpecttheta{d_{B, n_t}}
\myexpecttheta{
	\frac{ \lambda_{A,i} \lambda'_{n_t} {\sigma}_p^2}
	{(\lambda'_{n_t}+\sigma_w^2){\sigma}_i^2 + \beta}} \\
& \overset{(b)}{=}\sum_{n_t=1}^{N_t}\sum_{i = 1}^{\tilde{P}}
\frac{\mytrop{\mymat{R}_B}}{N_t}
\myexpecttheta{
	\frac{ \lambda_{A,i} \lambda'_{n_t} {\sigma}_i^2}
	{(\lambda'_{n_t}+\sigma_w^2){\sigma}_i^2 + \beta}}.
\end{aligned}
\end{equation}
Here, $(a)$ holds since $\mymat{U}'$ and $\lambda'_{n_t} \triangleq \sigma'^2_{n_t}$ are independent as they can been seen as the eigenmatrix and eigenvalues of the Wishart matrix $\mymat{R}_B^{\frac{1}{2}} \mymat{\Theta}^* \mymat{\Theta}^T (\mymat{R}_B^{\frac{1}{2}})^H$, respectively \cite[Theorem 2.2]{couillet2011random}. Equality $(b)$ holds since
\begin{equation}
\begin{aligned}
\myexpecttheta{d_{B, n_t}} 
&= 
\myexpecttheta{\left(\mymat{U}'^H \mymat{R}_B \mymat{U}' \right)_{n_t,n_t}}\\
&=
\myexpecttheta{\myvec{u}_{n_t}'^H \mymat{R}_B \myvec{u}_{n_t}'}
=
\myexpecttheta{\mytrop{\myvec{u}_{n_t}'^H \mymat{R}_B \myvec{u}_{n_t}'}}\\
&=
\mytrop{ \mymat{R}_B \myexpecttheta{\myvec{u}_{n_t}' \myvec{u}_{n_t}'^H}}
\overset{(c)}{=} \frac{\mytrop{\mymat{R}_B}}{N_t},
\end{aligned}
\end{equation}
where $\myvec{u}_{n_t}'$ denotes the $n_t-$th column of $\mymat{U}'$, i.e., the eigenvector of $\mymat{R}_B^{\frac{1}{2}} \mymat{\Theta}^* \mymat{\Theta}^T (\mymat{R}_B^{\frac{1}{2}})^H$, and $(c)$ holds since the eigenvectors of Wishart matrix $\mymat{R}_B^{\frac{1}{2}} \mymat{\Theta}^* \mymat{\Theta}^T (\mymat{R}_B^{\frac{1}{2}})^H$ are
uniformly distributed on the unit sphere \cite[Theorem 2.2]{couillet2011random}, resulting in $\myexpecttheta{\myvec{u}_{n_t}' \myvec{u}_{n_t}'^H} =  \mymat{I}_{N_t}/N_t$.
Then, by leaving out the constant term $\frac{\mytrop{\mymat{R}_B}}{N_t}$ in the last line of \eqref{eqn:obj_func_DI}, the optimization of $\{\sigma_{i}\}_{i=1}^{\tilde{P}}$ in the \gls{DD} strategy is given by \eqref{eqn:opt_eig_DI}.

Both \eqref{eqn:opt_eig_DD} and \eqref{eqn:opt_eig_DI} are convex optimizations since $\frac{ \lambda_{A,i} \lambda'_{n_t} {\sigma}_i^2}
{(\lambda'_{n_t}+\sigma_w^2){\sigma}_i^2 + \beta}$ is a concave function \gls{wrt} ${\sigma}_i^2$.

Now that $\mymat{V}$ and $\mymat{\Lambda}$ are optimized, we prove that the unitary \gls{DFT} matrix given by \eqref{eqn:dft} is the optimal $\mymat{U}$ among all rotations. Recalling that \eqref{eqn:obj_func_svd} indicates that $f_A$ is invariant of the left singular matrix of $\mymatb{A}$, we can set $\mymatb{A}$ with the optimized $\mymat{V}, \mymat{\Lambda}$ and find the unitary matrix $\mymat{U}$ such that all diagonal entries of $\mymat{U}\mymatb{A}\mymatb{A}^H\mymatb{U}^H$ are equal. In this setting, $\mymatb{A}\mymatb{A}^H = \mymatb{\Lambda}\mymatb{\Lambda}^H$ is a diagonal matrix. Thus, $\mymat{U}\mymatb{A}\mymatb{A}^H\mymatb{U}^H$ has equal diagonal entries when $\mymat{U}$ is the unitary \gls{DFT} matrix \cite[Lemma 2.10]{palomar2007mimo}.

\end{document}